\newtheorem{proposition}{\textbf{Proposition}}
\newtheorem{lemma}{\textbf{Lemma}}
\newtheorem{definition}{\textbf{Definition}}
\newcommand{\etal}{\emph{et al.}}
\newcommand{\defn}{\triangleq}
\newcommand{\ie}{i.e., }
\newcommand{\dotleq}{\overset{\cdot}{\leq}}
\begin{document}

\title{Diversity-Multiplexing Tradeoff of Network Coding with Bidirectional Random Relaying}

\author{Chun-Hung Liu and Jeffrey G. Andrews \\Department of Electrical and Computer Engineering \\The University of Texas at Austin\\ Austin, TX 78712-0204, USA\\ Email: chliu@mail.utexas.edu and jandrews@ece.utexas.edu}

\maketitle

\begin{abstract}
This paper develops a diversity-multiplexing tradeoff (DMT) over a bidirectional random relay set in a wireless network where the distribution of all nodes is a stationary Poisson point process. This is a nontrivial extension of the DMT because it requires consideration of the cooperation (or lack thereof) of relay nodes, the traffic pattern and the time allocation between the forward and reverse traffic directions.  We then use this tradeoff to compare the DMTs of traditional time-division multihop (TDMH) and network coding (NC). Our main results are the derivations of the DMT for both TDMH and NC.  This shows, surprisingly, that if relay nodes collaborate NC does not always have a better DMT than TDMH since it is difficult to simultaneously achieve bidirectional transmit diversity for both source nodes.  In fact, for certain traffic patterns NC can have a worse DMT due to suboptimal time allocation between the forward and reverse transmission directions.
\end{abstract}


\section{Introduction}
The fundamental tradeoff between diversity and multiplexing gain for point-to-point multiple input and multiple output (MIMO) channels was found in \cite{LZDNCT03}, and has become a popular metric for comparing transmission protocols.  In this work, our first objective is to extend the DMT to the scenario of a multihop bidirectional relaying wireless network.  Our second objective is to apply this to the specific comparison of traditional time-division relaying and network coding, with the goal of learning when or how to use each of those protocols to acquire a better DMT.

Our model considers the practical situation of two communicating nodes in an ad hoc network, whereby each is both the source and the destination for the other. These nodes pairs wish to exchange their packets over one or more relay nodes because the direct channel between them is weak.  There are many approaches to exchanging information between the two nodes, but in order to investigate a non-trivial DMT problem, we assume that the same frequency band is used in both directions and that all nodes are half-duplex, \ie cannot transmit and receive simultaneously.  Specifically, we consider two multihop transmission protocols. The first approach is the traditional approach whereby the two sources share the relays in time.  This so-called time-division multihop (TDMH) approach requires four time slots to exchange a packet in each direction. The second approach is multihop network coding (NC)\footnote{In this paper, we only discuss the DMT problem of network coding with XORing on the MAC layer, which is so called ``digital network coding''. The DMT problem of analog network coding is out of the scope in this paper.} \cite{SYRLRWYNC03,SKHRWHDKMMJC08,CHLFXJGA09}, which is known to be more efficient than TDMH, and indeed saves one time slot compared to TDMH \cite{ABAKDPRAL06}.  Both of the approaches are illustrated in Fig. \ref{Fig:TwoWayMultiRelaySys}(b).

The idea of wireless NC descends from Ahlswede \etal \cite{RANCSYRLRWY00} for improving the capacity of wired networks. By taking advantage of the broadcast nature of the wireless medium, NC achieves a significant throughput gain under certain circumstances\cite{PLNJKES06,PPHY0607,CHLFXJGA09}. It also can be used to exploit cooperative diversity between source and destination nodes\cite{LXTEFJKDJC07,YCSKJL06}. Since NC is able to provide diversity gain as well as throughput gain, it motivates study on how the DMT of NC behaves and if it has better tradeoff compared to TDMH. For example, does the above noted throughput gain of NC come at the expense of diversity
gain? Importantly, we consider bidirectional transmission over a random number of relays -- the nodes in the networks form a stationary Poisson point process and there exits a random set of idle nodes which can assist to route packets between the two source nodes. This plurality of relays may cooperate in a number of different ways or not at all, and each cooperation scenario leads to a different DMT result for both NC and TDMH.

The key to deriving the DMT of TDMH and NC is a suitably defined outage event, defined as a failure of information exchange between the two source nodes. The DMTs of TDMH and NC here are quite different from the previous multihop DMT works (typically see \cite{DGAGHVP08,RVRWH08,ABAKDPRAL06,JNLGWW03} and the references therein) due to their dependence on the traffic pattern, time allocation of bidirectional transmission, as well as the average number of available relay nodes in the random relay set. The main results of this paper are two propositions which respectively provide the DMTs of TDMH and NC. These propositions demonstrate that NC does not always provide a better DMT than TDMH in the relay collaboration case because bidirectional transmit diversity cannot be exploited simultaneously: using an optimally selected relay node to receive and transmit (or broadcast) is practically preferable since it achieves the same DMT and no relay coordination is required. NC could in fact have a worse DMT if there is suboptimal time allocation for a certain traffic pattern. Intuitively, if the offered traffic load is much higher in the forward direction than the reverse direction relative to one of the source nodes, then bidirectional network coding may not be helpful for that source since it presumes a symmetric data rate.

\begin{figure*}[!t]
\centering
\includegraphics[width=6in, height=1.5in]{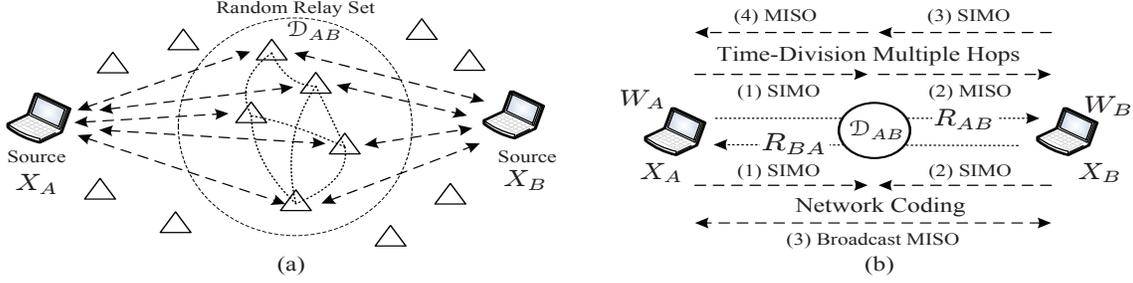}\\
\caption{(a) The bidirectional relaying system: the relay nodes in the random relay set $\mathcal{D}_{AB}$ between source nodes $X_A$ and $X_B$ are a stationary PPP of intensity $\lambda_r$. (b) The equivalent
model. Information exchange between source node A and B is through the intermediate relay
node set $\mathcal{D}_{AB}$. $R_{AB}$ and $R_{BA}$ denote the end-to-end \emph{forward}
and \emph{backward} rates, respectively.}\label{Fig:TwoWayMultiRelaySys}
\end{figure*}


\section{System Model of Bidirectional Random Relaying}\label{Sec:SysModel}
The problem of information exchange by multihop routing can be fundamentally characterized by a bidirectional relaying system, as illustrated in Fig. \ref{Fig:TwoWayMultiRelaySys}. The two source nodes $X_A$ and $X_B$ would like to exchange their packets $W_A$ and $W_B$ over multiple relay nodes by TDMH and NC. TDMH needs four time slots to route the two packets and NC needs only three time slots due to broadcasting a XOR-ed packet $W_A\oplus W_B$ to the two source nodes. Here the nodes in the ad hoc network are assumed to form a stationary Poisson point process (PPP) of intensity $\lambda$. The network is also assumed to operate a slotted ALOHA protocol with transmission probability $p$, where $p\in(0,\frac{1}{2})$ so that the transmitters are a stationary thinning PPP of intensity $\lambda_t=\lambda\,p$, denoted by $\Phi_t=\{X_i, i\in\mathbb{N}\}$. The idle nodes (\ie nodes are not transmitting or receiving) are a stationary thinning PPP of intensity $\lambda_r=\lambda\,(1-2p)$, denoted by $\Phi_r=\{Y_j,j\in\mathbb{N}\}$\footnote{In this paper, node $X_i$ or $Y_j$ represents the node itself as well as its location in the network.}. Those idle nodes can perform like relays which are able to assist transmissions of other nodes. Specifically, we consider there exists an ``available'' \footnote{where ``available'' means any relay node in $\mathcal{D}_{AB}$ can successfully decode the information from both source nodes.} random relay regime  $\mathcal{D}_{AB}$ between nodes $X_A$ and $X_B$. Let the Lebesgue measure of $\mathcal{D}_{AB}$ denote by $\nu_r$ and thus the maximum average number of available relay nodes in $\mathcal{D}_{AB}$ is $\nu_r\,\lambda_r$. Furthermore, we assume all nodes in $\mathcal{D}_{AB}$ are able to collaborate under reasonable communication overhead so that every relay node can share its received information with others. In this context, $\mathcal{D}_{AB}$ \emph{virtually} becomes a big relay node equipped with multiple antennas so that the channels from node A to $\mathcal{D}_{AB}$ become a single-input-multiple-output (SIMO) channel (or a MISO channel from $\mathcal{D}_{AB}$ to node A).

In this work we also assume there are no direct channels between the two source nodes, otherwise, mutihop is not needed. All nodes in the network are assumed to be \emph{half-duplex} (nodes cannot transmit and receive at the same time). The fading channel gains between any two nodes $X$ and $Y$, denoted by $\{h_{XY}\}$, are independent and identically distributed (i.i.d.), reciprocal and a zero mean, circularly symmetric
complex Gaussian random variables with unit variance, $\{C_{XY}\}$ denote their corresponding channel capacities, and all transmitters have the same transmit power $\rho_0$. In order to facilitate the following descriptions and analysis, here diversity gain $d$ and multiplexing gain $m$ in \cite{LZDNCT03} need to be redefined in our notation as follows:
$$d\defn -\lim_{\bar{\gamma}\rightarrow\infty} \frac{\log \epsilon(\bar{\gamma})}{\log \bar{\gamma}}\quad \text{and}\quad m \defn \lim_{\bar{\gamma}\rightarrow\infty}\frac{R(\bar{\gamma})}{\log\bar{\gamma}},$$
where $\epsilon$ is the outage probability of \emph{information exchange in bidirectional relaying}, $R$ is the \emph{equivalent} \emph{end-to-end sum rate of two source nodes}, and $\bar{\gamma}$ is the average signal-to-interference-plus-noise (SINR) ratio without fading, which can be written as
\begin{equation}\label{Eqn:SINRwoFading}
\bar{\gamma} = \mathbb{E}[\gamma] = \mathbb{E}\left[\frac{\rho_0}{I_{\Phi_t}+N_0}\right],
\end{equation}
where $I_{\Phi_t}=\sum_{X_i\in(\Phi_t\setminus X_0)}\rho_0\,|h_{X_i}|^2\,\|X_i\|^{-\alpha}$ is the aggregate interference of a receive node that is a Poisson shot noise process\footnote{The Poisson shot noise process for receiver node $Y_j$ should be expressed as $I_{\Phi_t}=\sum_{X_i\in(\Phi_t\setminus X_0)}\rho_0\,|h_{X_iY_j}|^2\,\|X_i-Y_j\|^{-\alpha}$. Since $\Phi_t$ is stationary, according
to Slivnyak's theorem \cite{DSWKJM96} the statistics of signal reception seen by receiver $Y_j$ is the same as that seen
by any other receivers in the network. So the Poisson shot noise here is evaluated at the reference receiver located at the origin.}, $\|X_i\|$ denotes the distance between transmitter $X_i$ and the origin, and $\alpha>2$ is the path loss exponent, and $N_0$ is the noise power. Note that $\epsilon$ and $R$ are not defined based on traditional \emph{point-to-point} transmission. In this work, they are defined by an \emph{end-to-end} fashion because TDMH and NC are \emph{decode-and-forward} \emph{multihop-based} protocols. In addition, in Fig. \ref{Fig:TwoWayMultiRelaySys}(b) we call the end-to-end rate from the left node to the right node the \emph{forward} rate while the \emph{backward} rate is naturally the end-to-end rate in the opposite direction. The traffic pattern parameter $\mu$ is the ratio of the backward to the forward rate, \ie $\mu=R_{BA}/R_{AB}$.

Since the system we study here is aimed at information exchange over bidirectional relaying, it is important to ensure that the two source nodes in Fig. \ref{Fig:TwoWayMultiRelaySys} can successfully decode their packets at the same time. With this concept in mind, the reasonable way to declare an outage event happening in a bidirectional relaying system is whenever either one source node or both source nodes cannot successfully decode the desired packet. Therefore, the outage probability of transmission protocol $\texttt{S}$ for the system in Fig. \ref{Fig:TwoWayMultiRelaySys} is defined as
\begin{eqnarray}\label{Eqn:ErrProb2WrSys}
\epsilon_{\texttt{S}} \defn \mathbb{P}\left[\mathcal{E}_{\texttt{S},f}\cup \mathcal{E}_{\texttt{S},b}\right],
\end{eqnarray}
where $\mathcal{E}_{\texttt{S},f}\defn \{\tau_f\, I_{\texttt{S},f}<R_{AB}\}$ and $\mathcal{E}_{\texttt{S},b}\defn \{\tau_b\, I_{\texttt{S},b}<R_{BA}\}$ are the outage events of forward and backward transmission, and $\{\tau_f,\tau_b: \tau_f, \tau_b\in [0,1], \tau_f+\tau_b=1\}$ are time-allocation parameters for forward and backward transmission, respectively, and $\{I_{\texttt{S},f},I_{\texttt{S},b}\}$ are respectively forward and backward mutual information and studied in the following section.


\section{Mutual Information of TDMH and NC}\label{Sec:MutuInfo}
In this section we investigate the mutual information for TDMH and NC under different relay collaboration scenarios. We first start with TDMH.

\subsection{Mutual Information of TDMH}
Considering relay collaboration and a Gaussian input distribution, then the forward and backward mutual information for TDMH in a bidirectional random relaying set are shown to be
\begin{eqnarray}
I_{\texttt{TDMH} ,f} = I_{\texttt{TDMH} ,b}= \frac{1}{2}\min\left\{I_{1}, I_{2}\right\},\label{Eqn:CoopMutuInfoTDMHf}
\end{eqnarray}
where
$I_{1}\defn \log\left(1+\gamma \sum_{Y_D\in \mathcal{D}_{AB}} |h_{AD}|^2\,\|X_A-Y_D\|^{-\alpha}\right)$, $I_{2}\defn \log\left(1+\gamma \sum_{Y_D\in \mathcal{D}_{AB}} |h_{DB}|^2\,\|X_B-Y_D\|^{-\alpha}\right)$ because the forward or backward transmission first virtually passes through a SIMO channel and then through a MISO channel. Note that coefficient $\frac{1}{2}$ means the forward or backward data stream needs 2 time slots. Since all nodes have the same power and all channels are reciprocal, the forward and backward mutual information are equal.

In the case of relay noncooperation, an optimal relay node should be selected to assist bidirectional transmission by the following criterion:
\begin{equation}\label{Eqn:OptiRelTDMH}
Y_{D^*_{\texttt{TDMH}}} = \arg\max_{Y_D\in\mathcal{D}_{AB}}\left(\frac{\|X_A-Y_D\|^{\alpha}}{|h_{AD}|^2}+\frac{\|X_B-Y_D\|^{\alpha}}{|h_{BD}|^2}\right)^{-1},
\end{equation}
The selection criterion in \eqref{Eqn:OptiRelTDMH} is based on the idea of finding a relay node with the maximum end-to-end sum rates. Once $Y_{D^*_{\texttt{TDMH}}}$ is determined, its corresponding forward and backward mutual information are the same as in \eqref{Eqn:CoopMutuInfoTDMHf} with
\begin{eqnarray*}
I_1&=&\log\left(1+\gamma\,|h_{A D^*_{\texttt{TDMH}}}|^2\,\|X_A-Y_{D^*_{\texttt{TDMH}}}\|^{-\alpha}\right),\\
I_2&=&\log\left(1+\gamma\, |h_{B D^*_{\texttt{TDMH}}}|^2\,\|X_B-Y_{D^*_{\texttt{TDMH}}}\|^{-\alpha}\right).
\end{eqnarray*}
Finding an optimal relay can also provide the same diversity order due to exploited selection diversity. This result will be proved in the sequel.

\subsection{Mutual Information of NC}
For NC, its forward and backward mutual information can be shown as
\sublabon{equation}
\begin{eqnarray}
I_{\texttt{NC},f} &=& \frac{2}{3}\min\left\{I_1,\min\left\{\tilde{I}_1,\tilde{I}_2\right\}\right\},\label{Eqn:NonCoopMutuInfoDNCf}\\
I_{\texttt{NC},b} &=& \frac{2}{3}\min\left\{I_2,\min\left\{\tilde{I}_1,\tilde{I}_2\right\}\right\},\label{Eqn:NonCoopMutuInfoDNCb}
\end{eqnarray}
where $\tilde{I}_1=\log\left(1+\gamma\,|\sum_{Y_D\in\mathcal{D}_{AB}}h_{AD}\|X_A-Y_D\|^{-\alpha/2}|^2\right)$, $\tilde{I}_2=\log\left(1+\gamma\,|\sum_{Y_D\in\mathcal{D}_{AB}}h_{BD}\|X_B-Y_D\|^{-\alpha/2}|^2\right)$, and the coefficient $\frac{2}{3}$ is due to two data streams sharing three time slots. \sublaboff{equation} $\tilde{I}_1$ and $\tilde{I}_2$ stand for the mutual information for the forward broadcast channel and backward broadcast channel, respectively. They are calculated by the sum of the channel gains between $\mathcal{D}_{AB}$ and their respective destination source nodes since the relays are unable to provide the transmit diversity for both source nodes simultaneously. Accordingly, it results in a problem that the transmit diversity for both source nodes is unable to be exploited in the broadcast stage. This problem can be alleviated by using an optimal relay node to broadcast, which can be selected according to the following criterion:
\begin{eqnarray}
Y_{D^*_{\texttt{NC}}} = \arg\max_{Y_D\in\mathcal{D}_{AB}} \min\{|h_{DA}|^2\,\|X_A-Y_D\|^{-\alpha},\nonumber\\
|h_{DB}|^2\,\|X_B-Y_D\|^{-\alpha}\}.\label{Eqn:OptiBCRelayDNCwRxMRC}
\end{eqnarray}
The above criterion is to select a relay node in $\mathcal{D}_{AB}$ whose achievable broadcast channel capacity is maximal\cite{CHLFXJGA09}.

By using $Y_{D^*_{\texttt{NC}}}$ found in \eqref{Eqn:OptiBCRelayDNCwRxMRC} to broadcast, the forward and backward mutual information in \eqref{Eqn:NonCoopMutuInfoDNCf} and \eqref{Eqn:NonCoopMutuInfoDNCb} can be reduced to
$I_{\texttt{NC},f}=I_{\texttt{NC},b}=\frac{2}{3}\min\left\{\tilde{I}_1,\tilde{I}_2\right\}$ since $\tilde{I}_1=\log(1+\gamma |h_{AD_{\texttt{NC}}^*}|^2\,\|X_A-Y_{D_{\texttt{NC}}^*}\|^{-\alpha})$ and $\tilde{I}_2=\log(1+\gamma |h_{BD_{\texttt{NC}}^*}|^2\,\|X_B-Y_{D_{\texttt{NC}}^*}\|^{-\alpha})$ so that we know $I_1 \geq \tilde{I}_1$ and $I_2 \geq \tilde{I}_2$ almost surely. On the other hand, in the case of relay without collaboration what criterion we should follow to select an optimal relay node? The basic idea is also to search a relay node that can provide the maximum end-to-end sum rate. For NC, the maximum end-to-end sum rate happens whenever the bidirectional traffic is symmetric, \ie $R_{AB}=R_{BA}$\cite{CHLFXJGA09}\cite{PPHY0607}. In previous work \cite{CHLFXJGA09}, the maximum sum rate of NC over relay node $Y_D$ in terms of channel capacities is $2(1/C_{AD}+2/C_{DB})^{-1}$. So the optimal relay node $Y_{D^*_{\texttt{NC}}}$ can be equivalently selected by
\begin{equation}
Y_{D^*_{\texttt{NC}}} = \arg\max_{Y_D\in\mathcal{D}_{AB}} \left(2\frac{\|X_B-Y_D\|^{\alpha}}{|h_{BD}|^2}+\frac{\|X_A-Y_D\|^{\alpha}}{|h_{AD}|^2}\right)^{-1}.\label{Eqn:OptiRelDNC}
\end{equation}
Therefore, according to \eqref{Eqn:OptiRelDNC} the forward and backward mutual information for NC over $Y_{D^*_{\texttt{NC}}}$ can be found as
\begin{equation}\label{Eqn:NonCoopMutuInfoDNC}
I_{\texttt{NC},f}=I_{\texttt{NC},b}=\frac{2}{3}\min\{\tilde{I}_1,\tilde{I}_2\},
\end{equation}
where $\tilde{I}_{1}=\log(1+\gamma |h_{AD^*_{\texttt{NC}}}|^2\,\|X_A-Y_{D^*_{\texttt{NC}}}\|^{-\alpha})$, $\tilde{I}_{2}=\log(1+\gamma |h_{BD^*_{\texttt{NC}}}|^2\,\|X_B-Y_{D^*_{\texttt{NC}}}\|^{-\alpha}))$ and $Y_{D^*_{\texttt{NC}}}$ is determined by \eqref{Eqn:OptiRelDNC}.


\section{Main Results of DMT Analysis}\label{Sec:MainResults}
The cooperative diversity of time-division one-way relaying has been investigated in \cite{ABAKDPRAL06}\cite{JNLGWW03}. Here we investigate the DMT in bidirectional relaying for TDMH and NC. Before proceeding to the DMT analysis, we first recall the definition of an outage event happening in a bidirectional relaying system. According to \eqref{Eqn:ErrProb2WrSys} and using Boole's inequality, a bidirectional relaying system has the following inequality of outage probability:
\begin{eqnarray}
\epsilon_{\texttt{S}} \leq \epsilon_{\texttt{S},f}+\epsilon_{\texttt{S},b}\,\, ,\label{Eqn:DefnErrProb}
\end{eqnarray}
where $\texttt{S}$ means $\texttt{TDMH}$ or $\texttt{NC}$, $\epsilon_{\texttt{S},f}\defn \mathbb{P}[\mathcal{E}_{\texttt{S},f}]$ and $\epsilon_{\texttt{S},b}\defn \mathbb{P}[\mathcal{E}_{\texttt{S},f}]$. According to \eqref{Eqn:DefnErrProb}, the DMTs of TDMH and NC can be derived in the following subsections. Note that \emph{in the following analysis, we use notation $\gamma\star x$ instead of $\gamma^x$ in order to clearly present the complicated expression of exponent $x$}.

\subsection{Diversity-Multiplexing Tradeoff of TDMH}
The DMT of TDMH with or without relay collaboration is presented in the following proposition.
\begin{proposition}\label{Pro:DiveMutiTrofTDMH}
Consider $\Phi_r\cap\mathcal{D}_{AB}\neq\emptyset$ and every relay node in $\mathcal{D}_{AB}$ collaborates. TDMH achieves the following diversity-multiplexing tradeoff
\begin{equation}\label{Eqn:DiveMultTrofTDMH}
d = (\lambda_r\,\nu_r)\left(1-\frac{2m}{\min\{(1+\mu)\tau_f,(1+1/\mu)\tau_b\}}\right),
\end{equation}
where $m\in\left(0,\min\{(1+\mu)\tau_f,(1+1/\mu)\tau_b\}/2\right)$. If there is no collaboration in $\mathcal{D}_{AB}$, then TDMH over $Y_{D_{\texttt{TDMH} }^*}$ is able to achieve the DMT in \eqref{Eqn:DiveMultTrofTDMH} as well, where $Y_{D^*_{\texttt{TDMH}}}$ denotes the optimal relay node found by \eqref{Eqn:OptiRelTDMH}.
\end{proposition}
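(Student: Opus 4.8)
The plan is to sandwich the bidirectional outage probability $\epsilon_{\texttt{TDMH}}$ between two quantities with the same $\bar\gamma$-exponent. By \eqref{Eqn:DefnErrProb}, $\epsilon_{\texttt{TDMH}}\le\epsilon_{\texttt{TDMH},f}+\epsilon_{\texttt{TDMH},b}$, while trivially $\epsilon_{\texttt{TDMH}}\ge\max\{\epsilon_{\texttt{TDMH},f},\epsilon_{\texttt{TDMH},b}\}$, and a sum and a maximum of two exponentially decaying terms share the dominant exponent; so it is enough to find the exponents of $\epsilon_{\texttt{TDMH},f}$ and $\epsilon_{\texttt{TDMH},b}$ and keep the smaller one. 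Since $\mu=R_{BA}/R_{AB}$ and the multiplexing gain is defined through the sum rate $R=R_{AB}+R_{BA}$, I would first write $R_{AB}\doteq\frac{m}{1+\mu}\log\bar\gamma$ and $R_{BA}\doteq\frac{m\mu}{1+\mu}\log\bar\gamma$; substituting \eqref{Eqn:CoopMutuInfoTDMHf} (and using $I_{\texttt{TDMH},f}=I_{\texttt{TDMH},b}$) then reduces the forward and backward outage events to $\{\min\{I_1,I_2\}<\beta_f\log\bar\gamma\}$ and $\{\min\{I_1,I_2\}<\beta_b\log\bar\gamma\}$ with $\beta_f\defn\frac{2m}{(1+\mu)\tau_f}$ and $\beta_b\defn\frac{2m}{(1+1/\mu)\tau_b}$, where $\max\{\beta_f,\beta_b\}=\frac{2m}{\min\{(1+\mu)\tau_f,(1+1/\mu)\tau_b\}}$, kept in $(0,1)$ by the stated range of $m$.

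Next I would compute the exponent of $\epsilon_{\texttt{TDMH},f}$. Writing $I_j=\log(1+\gamma Z_j)$ with $Z_1=\sum_{Y_D\in\mathcal{D}_{AB}}|h_{AD}|^2\|X_A-Y_D\|^{-\alpha}$ and $Z_2=\sum_{Y_D\in\mathcal{D}_{AB}}|h_{DB}|^2\|X_B-Y_D\|^{-\alpha}$, and normalizing the SINR ($\gamma\doteq\bar\gamma$), the event $\{I_1<\beta_f\log\bar\gamma\}$ becomes, up to exponentially negligible corrections, $\{Z_1<\bar\gamma^{\beta_f-1}\}$ (similarly for $Z_2$); Boole's inequality applied to $\{\min\{I_1,I_2\}<\beta_f\log\bar\gamma\}=\{I_1<\beta_f\log\bar\gamma\}\cup\{I_2<\beta_f\log\bar\gamma\}$ then makes $\epsilon_{\texttt{TDMH},f}$ exponentially equivalent to $\mathbb{P}[Z_1<\bar\gamma^{\beta_f-1}]+\mathbb{P}[Z_2<\bar\gamma^{\beta_f-1}]$ (both terms having the same exponent). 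Conditioned on $\Phi_r\cap\mathcal{D}_{AB}\ne\emptyset$ and on the relay count, $Z_1$ is a sum of $|\Phi_r\cap\mathcal{D}_{AB}|$ independent unit-mean exponentials with path-loss weights bounded above and below ($\mathcal{D}_{AB}$ is bounded and separated from $X_A$), i.e.\ a Gamma variate whose small-argument CDF behaves like the $|\Phi_r\cap\mathcal{D}_{AB}|$-th power of its argument; invoking the model of Section~\ref{Sec:SysModel}, in which $\mathcal{D}_{AB}$ acts as one node with $\lambda_r\nu_r$ antennas, this yields $\mathbb{P}[Z_1<\bar\gamma^{\beta_f-1}]\doteq\bar\gamma^{-\lambda_r\nu_r(1-\beta_f)}$. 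The backward event gives exponent $\lambda_r\nu_r(1-\beta_b)$, so combining the two directions produces $d=\lambda_r\nu_r\min\{1-\beta_f,1-\beta_b\}=\lambda_r\nu_r\bigl(1-\max\{\beta_f,\beta_b\}\bigr)$, which is \eqref{Eqn:DiveMultTrofTDMH}; the admissible interval of $m$ is precisely where this exponent is positive.

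For the no-collaboration case I would show the optimally selected relay $Y_{D^*_{\texttt{TDMH}}}$ of \eqref{Eqn:OptiRelTDMH} recovers the same exponent through selection diversity. Rule \eqref{Eqn:OptiRelTDMH} minimizes $\frac{\|X_A-Y_D\|^{\alpha}}{|h_{AD}|^2}+\frac{\|X_B-Y_D\|^{\alpha}}{|h_{BD}|^2}$ over $\mathcal{D}_{AB}$, and for the selected relay the outage event $\min\{I_1,I_2\}<\beta\log\bar\gamma$ (with $\beta$ the relevant forward or backward threshold) is, after the same normalization and up to bounded factors, the event that this sum, evaluated at $Y_D=Y_{D^*_{\texttt{TDMH}}}$, exceeds $\bar\gamma^{1-\beta}$; since $Y_{D^*_{\texttt{TDMH}}}$ attains the minimum of that sum over $\mathcal{D}_{AB}$, the event coincides with the intersection over all relays of their individual outage events. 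Conditioned on the relay count these are independent, each of exponent $1-\beta$ (a single Rayleigh link with bounded path loss has diversity one), so their intersection has exponent $|\Phi_r\cap\mathcal{D}_{AB}|(1-\beta)$; reading $|\Phi_r\cap\mathcal{D}_{AB}|$ as $\lambda_r\nu_r$ and combining forward and backward as before reproduces \eqref{Eqn:DiveMultTrofTDMH}.

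I expect the main obstacle to be two linked technical points. First, making the substitution $\gamma\doteq\bar\gamma$ legitimate: one must check that the Poisson shot-noise interference field perturbs only the pre-constant, not the polynomial order, of every outage probability above (the interference has infinite mean, so this needs care — e.g.\ conditioning on the interference realization and controlling the induced $\bar\gamma$-dependence of each conditional outage probability). Second, justifying that the conditional-on-relay-count exponent $|\Phi_r\cap\mathcal{D}_{AB}|(1-\beta)$ may be replaced by $\lambda_r\nu_r(1-\beta)$ under the hypothesis $\Phi_r\cap\mathcal{D}_{AB}\ne\emptyset$, together with the averaging over relay positions inside $\mathcal{D}_{AB}$ — this is the step that turns the random relay ensemble into the equivalent $\lambda_r\nu_r$-antenna node of Fig.~\ref{Fig:TwoWayMultiRelaySys}. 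The remaining manipulations — Boole's inequality, extracting the $\min$ from the rate constraint, the Gamma small-ball estimate, and the independence of the per-relay outages — are routine.
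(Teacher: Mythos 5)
Your outline follows essentially the same route as the paper's proof: Boole's inequality over the forward and backward outage events, the rate split $R_{AB}=\frac{m}{1+\mu}\log\bar\gamma$ and $R_{BA}=\frac{m}{1+1/\mu}\log\bar\gamma$ producing the thresholds $\beta_f=\frac{2m}{(1+\mu)\tau_f}$ and $\beta_b=\frac{2m}{(1+1/\mu)\tau_b}$, reduction of each direction to a small-ball probability for the weighted exponential sums $Z_1,Z_2$, and, in the no-collaboration case, factorization of the selected relay's outage into a product of independent per-relay outage probabilities (the paper uses the same harmonic-sum-versus-min equivalence you invoke, and its Lemma~\ref{Lem:ProbOptiRandVectApprox} plus Campbell's theorem in place of your direct product argument). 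The final combination $d=\lambda_r\nu_r\bigl(1-\max\{\beta_f,\beta_b\}\bigr)$ and the admissible range of $m$ come out identically.

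The step you yourself flag as the main obstacle is, however, a genuine gap, and the repair you sketch would not close it. If you condition on the relay count $N=|\Phi_r\cap\mathcal{D}_{AB}|$ and obtain the Gamma-type small-ball estimate $\mathbb{P}[Z_1<\bar\gamma^{\beta-1}\mid N]\doteq\bar\gamma^{-N(1-\beta)}$, then averaging over the Poisson law of $N$ (conditioned on $N\geq 1$) gives $\sum_{N\geq 1}\mathbb{P}[N]\,\bar\gamma^{-N(1-\beta)}$, which for $\bar\gamma\to\infty$ is dominated by the $N=1$ term; the resulting diversity order is $1-\beta$, not $\lambda_r\nu_r(1-\beta)$. ``Reading $N$ as $\lambda_r\nu_r$'' is therefore not a harmless substitution of a mean for a random variable. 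The paper's Lemma~\ref{Lem:ProbExpoRVsApprox} arrives at the factor $\lambda_r\nu_r$ by a different manipulation: it bounds the conditional probability by $\prod_k\|Z_k\|^{\alpha}\theta(\omega)$ and then applies Campbell's theorem to $\sum_k\log(\cdot)$, i.e.\ it moves the expectation over the point process inside the exponent, replacing $\mathbb{E}[e^{S}]$ by $e^{\mathbb{E}[S]}$. You would need to adopt that device (or an equivalent one) explicitly; note also that by Jensen's inequality $e^{\mathbb{E}[S]}\leq\mathbb{E}[e^{S}]$, so this interchange is itself a delicate point rather than a routine averaging step, and the same issue recurs in your selection-diversity argument where the product over relays again has a random number of factors.
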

\begin{proof}
Let $\mathcal{E}_A$ ($\mathcal{E}_B$) denote the event that the relay node nodes in $\mathcal{D}_{AB}$ cannot correctly decode $W_A$ ($W_B$) and $\mathcal{E}^{c}_A$ ($\mathcal{E}^c_B$) denote the complement of $\mathcal{E}_A$ $(\mathcal{E}_B)$. Thus we have
\begin{eqnarray*}
\epsilon_{\texttt{TDMH},f} &=& \mathbb{P}\left[\mathcal{E}_{\texttt{TDMH} ,f}|\mathcal{E}_A\right]\mathbb{P}[\mathcal{E}_A]+
\mathbb{P}\left[\mathcal{E}_{\texttt{TDMH} ,f}|\mathcal{E}^c_A\right]\mathbb{P}[\mathcal{E}^c_A]\\
&=& \mathbb{P}[\mathcal{E}_A]+\mathbb{P}\left[\tau_f I_{2}<2\,R_{AB}\right]\mathbb{P}[\mathcal{E}^c_A],
\end{eqnarray*}
where $\mathbb{P}[\mathcal{E}_A]=\mathbb{P}\left[\tau_f I_{1}<2\,R_{AB}\right]$. Let $R_{AB}+R_{BA}=m\log(\gamma)$ so that $R_{AB}=\frac{m}{1+\mu}\log(\gamma)$. By using $I_1$ and $I_2$ in \eqref{Eqn:CoopMutuInfoTDMHf}, we thus have
\begin{eqnarray}
\epsilon_{\texttt{TDMH},f} &\leq& 2\,\mathbb{E}\left[\mathbb{P}\left[\min\left\{e^{I_1}, e^{I_2} \right\}-1< \gamma\star (d_f+1)|\gamma\right]\right]\nonumber\\
&\overset{(a)}{\dotleq}& \bar{\gamma}\star \left(\lambda_r\,\nu_r\,d_f\right),\label{Eqn:CoopOutageProbTDMHf}
\end{eqnarray}
for large $\bar{\gamma}$ and $m\in\left(0,\frac{1}{2}(1+\mu)\tau_f\right)$, where $d_f \defn 2\,m/\tau_f(1+\mu)-1$ and $(a)$ follows from Lemma \ref{Lem:ProbExpoRVsApprox} in Appendix. Similarly, we can show
\begin{equation}\label{Eqn:CoopOutageProbTDMHb}
\epsilon_{\texttt{TDMH},b}\dotleq \bar{\gamma}\star \left(\lambda_r\,\nu_r\,d_b \right),
\end{equation}
for large $\bar{\gamma}$ and $m\in\left(0,\frac{1}{2}(1+1/\mu)\tau_b\right)$, where $d_b\defn 2m/(1+1/\mu)\tau_b-1$. According to \eqref{Eqn:DefnErrProb}, it thus follows that
\begin{equation*}
\epsilon_{\texttt{TDMH}} \dotleq \bar{\gamma}\star \left[\lambda_r\,\nu_r\,\left(\frac{2m}{\min\{(1+\mu)\tau_f,(1+1/\mu)\tau_b\}}-1\right)\right],
\end{equation*}
for large $\bar{\gamma}$ and $m\in(0,\frac{1}{2}\min\{(1+\mu)\tau_f,(1+1/\mu)\tau_b\})$.

Now consider there is no collaboration in $\mathcal{D}_{AB}$. The optimal relay node $Y_{D_{\texttt{TDMH} }^*}$ is selected according to \eqref{Eqn:OptiRelTDMH}. So we can obtain
\begin{eqnarray*}
\epsilon_{\texttt{TDMH},f} \leq 2\,\mathbb{P}[\min\{|h_{A D_{\texttt{TDMH}}^*}|^2\|X_A-Y_{D_{\texttt{TDMH}}^*}\|^{-\alpha},\\
|h_{BD^*_{\texttt{TDMH} }}|^2\|X_B-Y_{D_{\texttt{TDMH}}^*}\|^{-\alpha}\}< \gamma\star d_f].
\end{eqnarray*}
Since $Y_{D^*_{\texttt{TDMH}}}$ is optimal in $\mathcal{D}_{AB}$ and all channels are independent, we further have
\begin{eqnarray*}
\epsilon_{\texttt{TDMH},f} \leq \hspace{3in}\text{ }\\  2\prod_{Y_D\in\mathcal{D}_{AB}}\mathbb{P}\left[\left(\frac{\|X_A-Y_D\|^{\alpha}}{|h_{AD}|^2}+\frac{\|X_B-Y_D\|^{\alpha}}{|h_{BD}|^2}\right)^{-1} <\gamma\star d_f\right]\\
\overset{(b)}{\dotleq} \bar{\gamma}\star \left(\lambda_r\,\nu_r\,d_f \right),\hspace{0.1in}\text{ }
\end{eqnarray*}
where $(b)$ follows from Lemma \ref{Lem:ProbOptiRandVectApprox} in Appendix and Campbell's theorem\cite{DSWKJM96}. Likewise, we can get a similar result for $\epsilon_{\texttt{TDMH},b}$ as shown in \eqref{Eqn:CoopOutageProbTDMHb}. Thus optimal relay selection achieves the same DMT with relay collaboration in \eqref{Eqn:DiveMultTrofTDMH}.
\end{proof}

\subsection{Diversity-Multiplexing Tradeoff of NC}
Using NC in bidirectional multi-relaying has three transmission scenarios. If all relay nodes collaborate, in the first two time slots NC can have receive diversity at $\mathcal{D}_{AB}$ and no transmit diversity in the third time slot if all relay nodes join to broadcast. A better strategy in this case is to select an optimal relay to broadcast. For relay without collaboration, an optimal relay should be found to route packets. The DMTs of NC with these scenarios have been presented in the following proposition.
\begin{proposition}\label{Pro:DiveMutiTrafNC}
Suppose $\Phi_r\cap\mathcal{D}_{AB}\neq\emptyset$ and all relay nodes in $\mathcal{D}_{AB}$ collaborate to receive and then broadcast at the same time. The following DMT is achieved by NC:
\begin{equation}\label{Eqn:DiveMultTrofNC01}
d=1-\frac{3m}{2\min\{(1+\mu)\tau_f,(1+1/\mu)\tau_b\}},
\end{equation}
where $m\in\left(0,\frac{2}{3}\min\{(1+\mu)\tau_f,(1+1/\mu)\tau_b\}\right)$. If an optimal relay node is selected by \eqref{Eqn:OptiBCRelayDNCwRxMRC} to broadcast, NC achieves the following diversity-multiplexing tradeoff:
\begin{equation}\label{Eqn:DiveMultTrofNC02}
d=(\lambda_r\,\nu_r)\left(1-\frac{3m}{2\min\{(1+\mu)\tau_f,(1+1/\mu)\tau_b\}}\right).
\end{equation}
Furthermore, if an optimal relay node is selected to receive and broadcast then the DMT in \eqref{Eqn:DiveMultTrofNC02} is achieved as well.
\end{proposition}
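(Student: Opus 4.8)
The plan is to follow the proof of Proposition~\ref{Pro:DiveMutiTrofTDMH} step by step, replacing the TDMH mutual information by \eqref{Eqn:NonCoopMutuInfoDNCf}--\eqref{Eqn:NonCoopMutuInfoDNCb}. As there, start from the union bound \eqref{Eqn:DefnErrProb}, put $R_{AB}+R_{BA}=m\log\gamma$ so that $R_{AB}=\tfrac{m}{1+\mu}\log\gamma$ and $R_{BA}=\tfrac{m}{1+1/\mu}\log\gamma$, and set $d_f\defn\tfrac{3m}{2(1+\mu)\tau_f}-1$ and $d_b\defn\tfrac{3m}{2(1+1/\mu)\tau_b}-1$, both negative on the stated range of $m$. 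Because $I_{\texttt{NC},f}=\tfrac23\min\{I_1,\tilde{I}_1,\tilde{I}_2\}$, the event $\mathcal{E}_{\texttt{NC},f}$ reads $\min\{I_1,\tilde{I}_1,\tilde{I}_2\}<(d_f+1)\log\gamma$, so up to the usual $\dotleq$ slack $\epsilon_{\texttt{NC},f}$ is the probability that at least one of $\sum_{Y_D\in\mathcal{D}_{AB}}|h_{AD}|^2\|X_A-Y_D\|^{-\alpha}$, $\bigl|\sum_{Y_D\in\mathcal{D}_{AB}}h_{AD}\|X_A-Y_D\|^{-\alpha/2}\bigr|^2$, and the same quantity with $B$ in place of $A$ falls below $\gamma\star d_f$. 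Everything then reduces to estimating these three probabilities and combining the resulting exponent with its backward counterpart through \eqref{Eqn:DefnErrProb}.

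For the full-collaboration case \eqref{Eqn:DiveMultTrofNC01}, Boole's inequality bounds $\epsilon_{\texttt{NC},f}$ by the receive-stage probability $\mathbb{P}[\sum_{Y_D\in\mathcal{D}_{AB}}|h_{AD}|^2\|X_A-Y_D\|^{-\alpha}<\gamma\star d_f]$, which by Lemma~\ref{Lem:ProbExpoRVsApprox} together with Campbell's theorem is $\dotleq\bar\gamma\star(\lambda_r\nu_r d_f)$ thanks to the SIMO receive diversity, plus the two broadcast-stage probabilities. For a broadcast term I would condition on $\Phi_r\cap\mathcal{D}_{AB}$: given the relay positions, $\sum_{Y_D\in\mathcal{D}_{AB}}h_{AD}\|X_A-Y_D\|^{-\alpha/2}$ is a zero-mean circularly symmetric complex Gaussian, hence its squared modulus is exponential with mean $\sum_{Y_D\in\mathcal{D}_{AB}}\|X_A-Y_D\|^{-\alpha}$ and $\mathbb{P}[\,\cdot\,<x\mid\Phi_r]\le x/\sum_{Y_D\in\mathcal{D}_{AB}}\|X_A-Y_D\|^{-\alpha}$; taking the expectation over the PPP on the event $\Phi_r\cap\mathcal{D}_{AB}\neq\emptyset$, on which the denominator is bounded below since $\mathcal{D}_{AB}$ is bounded, gives $\dotleq\bar\gamma\star d_f$, \ie diversity order only $1$, because coherently combining complex Gaussians again yields a single complex Gaussian and provides no array or selection gain in the exponent. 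Since $\lambda_r\nu_r\ge1$ the broadcast terms dominate, so $\epsilon_{\texttt{NC},f}\dotleq\bar\gamma\star d_f$; the symmetric backward bound $\epsilon_{\texttt{NC},b}\dotleq\bar\gamma\star d_b$ and \eqref{Eqn:DefnErrProb} then give $d=\min\{-d_f,-d_b\}=1-\tfrac{3m}{2\min\{(1+\mu)\tau_f,(1+1/\mu)\tau_b\}}$.

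For the optimal-broadcast-relay case \eqref{Eqn:DiveMultTrofNC02}, picking $Y_{D^*_{\texttt{NC}}}$ by \eqref{Eqn:OptiBCRelayDNCwRxMRC} makes $\tilde{I}_1,\tilde{I}_2$ single-link quantities with $I_1\ge\tilde{I}_1$ and $I_2\ge\tilde{I}_2$ almost surely, so $I_{\texttt{NC},f}$ equals $\tfrac23\min\{\tilde{I}_1,\tilde{I}_2\}$ and $\epsilon_{\texttt{NC},f}$ is the probability that $\min\{|h_{AD^*_{\texttt{NC}}}|^2\|X_A-Y_{D^*_{\texttt{NC}}}\|^{-\alpha},|h_{BD^*_{\texttt{NC}}}|^2\|X_B-Y_{D^*_{\texttt{NC}}}\|^{-\alpha}\}<\gamma\star d_f$. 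Exactly as in the non-collaborative part of Proposition~\ref{Pro:DiveMutiTrofTDMH}, the optimality of $Y_{D^*_{\texttt{NC}}}$ in $\mathcal{D}_{AB}$ and the conditional independence of the channels turn this into $\mathbb{E}[\prod_{Y_D\in\mathcal{D}_{AB}}\mathbb{P}[\min\{|h_{AD}|^2\|X_A-Y_D\|^{-\alpha},|h_{BD}|^2\|X_B-Y_D\|^{-\alpha}\}<\gamma\star d_f\mid\Phi_r]]$, which Lemma~\ref{Lem:ProbOptiRandVectApprox} and Campbell's theorem evaluate as $\dotleq\bar\gamma\star(\lambda_r\nu_r d_f)$; with the analogous backward bound this gives \eqref{Eqn:DiveMultTrofNC02}. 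When one optimal relay both receives and broadcasts, \eqref{Eqn:NonCoopMutuInfoDNC} gives $I_{\texttt{NC},f}=I_{\texttt{NC},b}=\tfrac23\min\{\tilde{I}_1,\tilde{I}_2\}$ over the relay chosen by \eqref{Eqn:OptiRelDNC}; since $\min\{a,b\}\ge(2/b+1/a)^{-1}$, the forward outage event lies in $\{(2\|X_B-Y_{D^*_{\texttt{NC}}}\|^{\alpha}/|h_{BD^*_{\texttt{NC}}}|^2+\|X_A-Y_{D^*_{\texttt{NC}}}\|^{\alpha}/|h_{AD^*_{\texttt{NC}}}|^2)^{-1}<\gamma\star d_f\}$, whose probability is exactly the quantity maximized by \eqref{Eqn:OptiRelDNC}, so the same product-over-the-PPP estimate with Lemma~\ref{Lem:ProbOptiRandVectApprox} and Campbell's theorem again yields the exponent $\lambda_r\nu_r d_f$, hence \eqref{Eqn:DiveMultTrofNC02}.

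The main obstacle is the first case: one must make rigorous the statement that collaboration in the broadcast phase \emph{destroys} the diversity of that phase---a coherent sum of complex Gaussians being again a single complex Gaussian---which is exactly why the prefactor $\lambda_r\nu_r$ present in \eqref{Eqn:DiveMultTrofNC02} is absent in \eqref{Eqn:DiveMultTrofNC01}, and one must control the PPP average of $(\sum_{Y_D\in\mathcal{D}_{AB}}\|X_A-Y_D\|^{-\alpha})^{-1}$ on the event $\Phi_r\cap\mathcal{D}_{AB}\neq\emptyset$ so that the broadcast-term bound $\dotleq\bar\gamma\star d_f$ is legitimate. All the remaining steps are repetitions of the exponential-tail estimate (Lemma~\ref{Lem:ProbExpoRVsApprox}) and the selection-diversity estimate (Lemma~\ref{Lem:ProbOptiRandVectApprox}) already used for Proposition~\ref{Pro:DiveMutiTrofTDMH}.
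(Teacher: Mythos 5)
Your proposal is correct and follows essentially the same route as the paper: the same union-bound decomposition into a receive-stage term (SIMO diversity $\lambda_r\nu_r$ via Lemma~\ref{Lem:ProbExpoRVsApprox}) plus broadcast-stage terms that collapse to diversity order one under full collaboration (the paper's $\Xi_{b_1},\Xi_{b_2}$ factors are exactly your conditional-exponential argument), the same product-over-the-PPP selection-diversity estimate for the optimal broadcast relay, and the same reduction of the receive-and-broadcast case to the harmonic-sum event of \eqref{Eqn:OptiRelDNC} handled by Lemma~\ref{Lem:ProbOptiRandVectApprox}. The only cosmetic difference is which appendix lemma you cite for the optimal-broadcast-relay bound, and your justification of why coherent combining yields no diversity in the exponent is actually spelled out more explicitly than in the paper.
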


\begin{proof}
By the definition of outage and using the same definitions of $\mathcal{E}_A$ and $\mathcal{E}_B$ in the proof of Proposition \ref{Pro:DiveMutiTrofTDMH}. So the outage probability of forward transmission can be shown as
\begin{eqnarray*}
\epsilon_{\texttt{NC},f} &\leq& \mathbb{P}[\mathcal{E}_A]
+\mathbb{P}\left[2\tau_f\min\{\tilde{I}_{2},\tilde{I}_{1}\}<3R_{AB}\right]\\
&\leq& \mathbb{P}[\mathcal{E}_A]+\mathbb{P}\left[2\tau_f\tilde{I}_{2}<3R_{AB}\right]+
\mathbb{P}\left[2\tau_f\tilde{I}_{1}<3R_{AB}\right],
\end{eqnarray*}
where $\mathbb{P}[\mathcal{E}_A]=\mathbb{P}\left[\frac{2}{3}\tau_f I_{1}<R_{AB}\right]$. Let $R_{AB}=\frac{m}{1+\mu}\log(\gamma)$ and consider the first case that every relay node collaborates to receive and then broadcasts without collaboration simultaneously. For large $\bar{\gamma}$ and using Lemma \ref{Lem:ProbExpoRVsApprox}, it follows that
\begin{eqnarray}
\epsilon_{\texttt{NC},f} &\leq & \bar{\gamma}\star \left(\lambda_r\,\nu_r\,\tilde{d}_f\right)+\left(\Xi_{b_1}+\Xi_{b_2} \right)\left(\bar{\gamma}\star \tilde{d}_f\right)\nonumber\\
&\dotleq& \bar{\gamma}\star \tilde{d}_f,
\end{eqnarray}
for large $\bar{\gamma}$ and $m\in(0,2(1+\mu)\tau_f/3)$, where $1/\Xi_{b_1}$ and $1/\Xi_{b_2}$ are respectively the variances of $|\sum_{Y_D\in\mathcal{D}_{AB}}h_{DA}\,\|X_A-Y_D\|^{-\alpha/2}|^2$ and $|\sum_{Y_D\in\mathcal{D}_{AB}}h_{DB}\,\|X_B-Y_D\|^{-\alpha/2}|^2$, and $\tilde{d}_f\defn 3m/2(1+\mu)\tau_f-1$. Similarly, we can show $\epsilon_{\texttt{NC},b}\dotleq \bar{\gamma}\star \tilde{d}_b$, for large $\bar{\gamma}$ and $m\in(0,2(1+1/\mu)\tau_b/3)$, where $\tilde{d}_b \defn 3m/2(1+1/\mu)\tau_b-1$. Then \eqref{Eqn:DiveMultTrofNC01} can be obtained since $\epsilon_{\texttt{NC}} \leq \epsilon_{\texttt{NC},f}+\epsilon_{\texttt{NC},b}$.

Consider NC with optimal relay $Y_{D_{\texttt{NC}}^*}$ selected by \eqref{Eqn:OptiBCRelayDNCwRxMRC} to broadcast. Then we have
\begin{eqnarray*}
&{}&\mathbb{P}\left[2\tau_f\tilde{I}_{2}<3R_{AB}\right]\nonumber\\
&&= \mathbb{E}\left[\mathbb{P}\left[|h_{D_{\texttt{NC}}^* B}|^2\|X_B-Y_{D_{\texttt{NC}}^*}\|^{-\alpha}<\gamma\star\tilde{d}_f\bigg|\gamma\right]\right]\nonumber\\
&&\overset{(a)}{\dotleq} \bar{\gamma}\star (\lambda_r\,\nu_r\,\tilde{d}_f),
\end{eqnarray*}
where $(a)$ follows that $Y_{D^*_{\texttt{NC}}}$ is optimal and $\{h_{DB}\}$ are independent, and from Lemma \ref{Lem:ProbExpoRVsApprox} in Appendix. Similarly, we have
$$\mathbb{P}\left[2\tau_b\tilde{I}_{1}<3R_{BA}\right]\dotleq \bar{\gamma}\star\left(\lambda_r\,\nu_r\,{\tilde{d}_b}\right).$$
Therefore, we can conclude
\begin{eqnarray*}
\epsilon_{\texttt{NC}} \dotleq \bar{\gamma}\star \left[\lambda_r\,\nu_r\,\left(\frac{3m}{2\min\{(1+\mu)\tau_f,(1+1/\mu)\tau_b\}}-1\right)\right].
\end{eqnarray*}

Next, we are going to look at the DMT of NC using an optimal relay node $Y_{D^*_{\texttt{NC}}}$ to receive and broadcast. $Y_{D^*_{\texttt{NC}}}$ is determined by \eqref{Eqn:OptiRelDNC}. Likewise, the first step is to calculate $\epsilon_{\texttt{NC},f}$ by \eqref{Eqn:NonCoopMutuInfoDNC} with $R_{AB}=\frac{m}{1+\mu}\log(\gamma)$, and thus we know the forward outage probability \eqref{Eqn:OutageProbFwadNC} shown on the top of the next page.
\begin{figure*}
\begin{equation}\label{Eqn:OutageProbFwadNC}
\mathbb{P}\left[\tilde{I}_1<\frac{3R_{AB}}{2\,\tau_f}\right] \leq \mathbb{E}\left[\mathbb{P}\left[\frac{|h_{AD_{\texttt{NC}}^*}|^2|h_{D^*_{\texttt{NC}}B}|^2}{2|h_{AD^*_{\texttt{NC}}}|^2
\|X_B-Y_{D^*_{\texttt{NC}}}\|^{\alpha}+|h_{D^*_{\texttt{NC}}B}|^2\|X_A-Y_{D^*_{\texttt{NC}}}\|^{\alpha}}< \gamma\star \left(\tilde{d}_f-1\right)\bigg|\gamma\right]\right].
\end{equation}
\end{figure*}
So we have $\epsilon_{\texttt{NC},f}\dotleq\bar{\gamma}\star (\lambda_r\,\nu_r\,\tilde{d}_f)$ for large $\bar{\gamma}$ and $m\in(0,2(1+\mu)\tau_f/3)$
because $Y_{D^*_{\texttt{NC}}}$ is optimal, and channel gains are independent so that Lemma \ref{Lem:ProbOptiRandVectApprox} in Appendix can be applied. Similarly, the exponential inequality for $\epsilon_{\texttt{NC},b}$ is $\epsilon_{\texttt{NC},b} \dotleq \bar{\gamma}\star (\lambda_r\,\nu_r\,\tilde{d}_b)$, for large $\bar{\gamma}$ and $m\in(0,2(1+1/\mu)\tau_b/3)$, where $\tilde{d}_b\defn 3m/2(1+1/\mu)\tau_b-1$. Thus NC over an optimal relay node achieves the DMT same as indicated in \eqref{Eqn:DiveMultTrofNC02}.
\end{proof}
The results in Propositions \ref{Pro:DiveMutiTrofTDMH} and \ref{Pro:DiveMutiTrafNC} have been presented in Fig. \ref{Fig:DMTradeoff} for $\mu=1$. For the case of $\tau_f=\tau_b=0.5$ in the figure, NC always has a better DMT than TDMH when relay nodes collaborate to receive and an optimal relay is selected to broadcast. This is because relay selection diversity is exploited to broadcast. If all relay nodes broadcast, NC will loose diversity since it is hard to achieve bidirectional transmit diversity at the same time for the relays in $\mathcal{D}_{AB}$. NC does not necessarily have a better DMT than TDMH if $\{\tau_f,\tau_b\}$ are  not optimally assigned. For example, if the forward and reverse times between node $X_A$ and $\mathcal{D}_{AB}$ are 0.01, the forward and reverse times between $\mathcal{D}_{AB}$ and node $X_B$ are 0.49 and $\mu=1$ then TDMH has $\tau_f=\tau_b=(0.01+0.49)/(0.5+0.5)=0.5$ and its DMT is $\lambda_r\,\nu_r\,(1-2m)$ while NC has $\tau_f=\frac{0.01+0.49}{(0.01+0.49)+2\cdot 0.49}\approx 0.34$ and $\tau_b=1-0.34=0.66$ and its DMT in \eqref{Eqn:DiveMultTrofNC02} becomes $d=\lambda_r\,\nu_r\,(1-2.2m)$. So NC has a worse DMT than TDMH in this case. Furthermore, the ideal DMT can be asymptotically approached if network coding can support information exchange for $N$ source nodes within $N+1$ time slots even when $N$ is very large.

\begin{figure}[h]
\centering
\includegraphics[width=3.5in, height=1.8in]{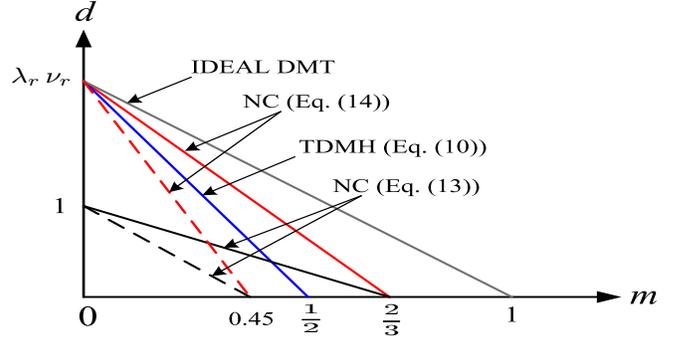}\\
\caption{Diversity-multiplexing tradeoffs for different transmission protocols ($\lambda_r\,\nu_r>1$, $\mu=1$). The results of solid lines are the case of optimal time allocation for NC, \ie $\tau_f=\tau_b=0.5$. The results of dashed lines are the case of suboptimal time allocation of NC, \ie $\tau_f=0.34$ and $\tau_b=0.66$.}\label{Fig:DMTradeoff}
\end{figure}

\section{Simulation Results}
From the above results in Propositions \ref{Pro:DiveMutiTrofTDMH} and \ref{Pro:DiveMutiTrafNC}, the DMT achieved by NC would be worse than that achieved by TDMH if the time allocation between forward and backward traffic is suboptimal. Here we simulate the DMT case that the two-way traffic of the two protocols is respectively through their optimal relay. We assume that all nodes have the same transmit power 18 dBm, and the channel between any two nodes has path loss exponent 3.5 and is reciprocal with flat Rayleigh fading. The distance between source nodes A and B is 60m, and the random relaying set is a circular area which has a diameter of 10m and is centered at the middle point between nodes A and B.

Suppose the node intensity $\lambda=0.1$, traffic pattern parameter $\mu=1$, multiplexing gain $m=1/4$. Consider the relays in the random set are not cooperative and an optimal relay is selected for routing/broadcasting the packets. The simulation results of outage probability versus average SINR for optimal and suboptimal time allocation are shown in Figs. \ref{Fig:OutProbST} and \ref{Fig:OutProbAT}, respectively. We can see the diversity gain of NC is almost the same as that of TDMH in Fig. \ref{Fig:OutProbAT}, while in Fig. \ref{Fig:OutProbST} the diversity gain of NC is obviously superior to that of TDMH. Therefore, from the DMT point of view one can also show that NC may not be always superior to TDMH when time allocations for bidirectional traffic are suboptimal and/or bidirectional traffic is asymmetric.

\begin{figure}[h]
\centering
\includegraphics[width=3.6in, height=2.6in]{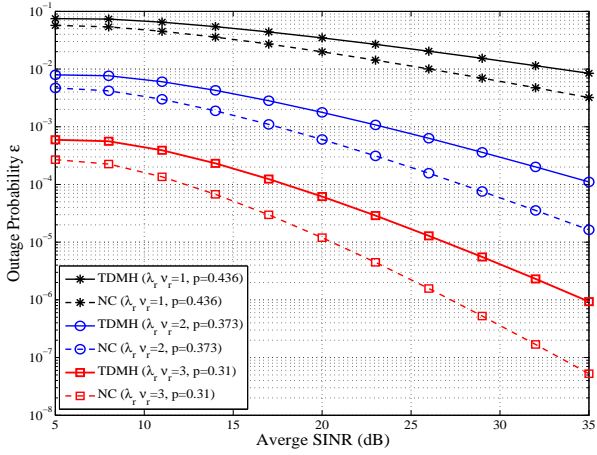}\\
\caption{Outage probabilities of the TDMH and NC protocols without relay collaboration. An optimal relay node is
selected to receive and transmit/broadcast for the two protocols and time allocation for bidirectional traffic is optimal, \ie $\tau_f=\tau_b=0.5$.}\label{Fig:OutProbST}
\end{figure}

\begin{figure}[h]
\centering
\includegraphics[width=3.6in, height=2.6in]{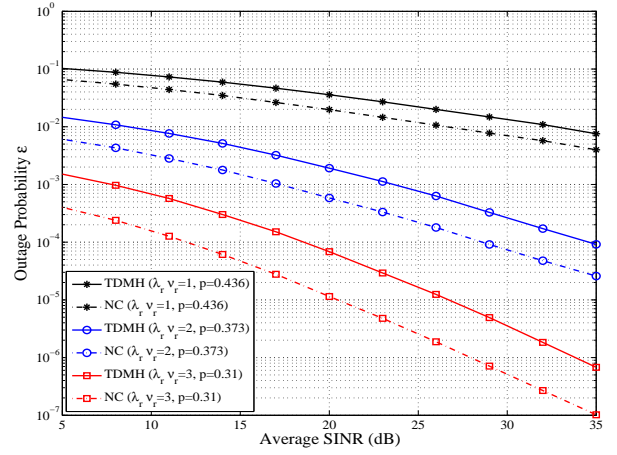}\\
\caption{Outage probabilities of the TDMH and NC protocols without relay collaboration. An optimal relay node is
selected to receive and transmit/broadcast for the two protocols and time allocation of bidirectional traffic is suboptimal for NC ($\tau_f=0.34$ and $\tau_b=0.66$) and optimal for TDMH ($\tau_f=\tau_b=0.5$).}\label{Fig:OutProbAT}
\end{figure}


\section{Concluding Remarks}\label{Sec:Conclusion}
The DMTs of TDMH and NC in the different scenarios of relay collaboration have been investigated in this paper. The information exchange between the two source nodes is over a random relay set in which the distribution of the relays is a stationary PPP. The DMT analysis here is based on end-to-end bidirectional outage so that the DMTs are affected by traffic pattern, time allocation between bidirectional traffic as well as the average number of relay nodes in the random relay set. Our main result proves that NC does not always have a better DMT than TDMH in the relay collaboration case because bidirectional transmit diversity is difficult to be achieved for both source nodes at the same time. In addition, the DMT of NC could be worse than that of TDMH as well due to suboptimal time allocation between the bidirectional traffic. From the DMT results, we can obtain some insight of how to do time sharing between the bidirectional traffic to achieve a better DMT for a given traffic pattern.


\appendix [Lemmas for DMT Analysis]
\begin{definition}\label{Defn:ExpOrder}
A function $g(\omega):\mathbb{R}_{++}\rightarrow\mathbb{R}_{++}$ is said to exponentially smaller than or equal to $x$, \ie $g(\omega)\dotleq\omega^x$, if $\lim_{\omega\rightarrow\infty}\log g(\omega)/\log\omega \leq x$. Similar definition can be applied to the equal sign.
\end{definition}
\begin{lemma}\label{Lem:ProbExpoRVsApprox}
Let $\mathcal{B}_z\subseteq\mathbb{R}^2$ be a Borel set and $\Phi'_z\defn\{(Z_i,g_i): i\in\mathbb{N}\}$ be a marked stationary PPP of intensity $\lambda_z$ where $\{g_i\}$ are i.i.d. exponential random variables with unit mean and variance. The distance between node $Z_i$ and the origin denotes by $\|Z_i\|$ and $\theta(\omega):\mathbb{R}_{++}\rightarrow \mathbb{R}_{++}$. If $\theta(\omega)\rightarrow 0$ as $\omega\rightarrow\infty$ and $\theta(\omega)$ is exponentially equal to $\theta_{\infty}$, then we have
\begin{equation}\label{Eqn:IeqExpRVsExpRel}
\mathbb{P}\left[\sum_{Z_k\in \Phi_z}g_k\,\|Z_k\|^{-\alpha} < \theta(\omega)\right]\dotleq \omega\star(\lambda_z\,\nu_z\,\theta_{\infty}),
\end{equation}
where $\alpha>2$, $\Phi_z=\Phi'_z\cap\mathcal{B}_z$, and $\nu_z$ is the Lebesgue measure of $\mathcal{B}_z$.
\end{lemma}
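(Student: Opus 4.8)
\emph{Proof idea.} The plan is to condition on the point configuration of $\Phi_z$ inside $\mathcal{B}_z$ and to exploit that a sum of nonnegative quantities falls below $\theta(\omega)$ only if every summand does. First I would condition on the number $N\defn|\Phi_z|$ of points and on their locations $Z_1,\dots,Z_N$. Since $\sum_{k}g_k\|Z_k\|^{-\alpha}<\theta(\omega)$ forces $g_k\|Z_k\|^{-\alpha}<\theta(\omega)$ for every $k$, and the unit-mean exponential marks $\{g_k\}$ are conditionally independent given the locations,
\begin{equation*}
\mathbb{P}\!\left[\sum_{k}g_k\|Z_k\|^{-\alpha}<\theta(\omega)\mid N,\{Z_k\}\right]\le\prod_{k=1}^{N}\mathbb{P}\!\left[g_k<\theta(\omega)\|Z_k\|^{\alpha}\right]=\prod_{k=1}^{N}\left(1-e^{-\theta(\omega)\|Z_k\|^{\alpha}}\right),
\end{equation*}
using $\mathbb{P}[g_k<t]=1-e^{-t}$.

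Next I would average over the locations. Given $N=n$ the $Z_k$ are i.i.d.\ uniform on $\mathcal{B}_z$, so the conditional probability is at most $a(\theta(\omega))^{n}$, where $a(\theta)\defn\frac{1}{\nu_z}\int_{\mathcal{B}_z}(1-e^{-\theta\|z\|^{\alpha}})\,dz$. For bounded $\mathcal{B}_z$ (as in the applications, where $\mathcal{B}_z=\mathcal{D}_{AB}$) and $\theta(\omega)\to0$, the expansion $1-e^{-t}=t+O(t^{2})$ gives $a(\theta)=c\,\theta\,(1+o(1))$ with $c\defn\frac{1}{\nu_z}\int_{\mathcal{B}_z}\|z\|^{\alpha}\,dz\in(0,\infty)$; hence $a(\theta(\omega))$ is exponentially equal to $\theta(\omega)$, i.e.\ to $\omega\star\theta_{\infty}$, and for each fixed $n$ the conditional bound $a(\theta(\omega))^{n}$ is exponentially equal to $\omega\star(n\,\theta_{\infty})$.

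Finally I would account for the number of relay branches. By Campbell's theorem the mean cardinality of $\Phi_z\cap\mathcal{B}_z$ is $\lambda_z\nu_z$; conditioning on $\Phi_z\neq\emptyset$ (as is done throughout the paper, so that $\mathcal{B}_z$ acts as the virtual $\lambda_z\nu_z$-branch node of Section~\ref{Sec:SysModel}) and taking the effective branch count to be this mean value turns the per-configuration estimate into $\mathbb{P}[\,\cdot\,]\dotleq\omega\star(\lambda_z\nu_z\,\theta_{\infty})$, which is the claim.

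The step that requires the most care is the last one, namely legitimately replacing the random (Poisson) number of points of $\Phi_z\cap\mathcal{B}_z$ by its mean $\lambda_z\nu_z$ inside the exponent: one must work on the nonempty-set event (the empty configuration contributes only an $\omega$-independent constant) and identify $\lambda_z\nu_z$ with the number of diversity branches, exactly the SIMO viewpoint of Section~\ref{Sec:SysModel}. The remaining ingredients — the union-type product bound, the i.i.d.\ uniform-location integral, and the Taylor estimate $a(\theta)\sim c\theta$ (finiteness of $c$ following from boundedness of $\mathcal{B}_z$) — are routine.
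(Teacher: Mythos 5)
Your first two steps coincide with the paper's argument: the implication that $\sum_k g_k\|Z_k\|^{-\alpha}<\theta(\omega)$ forces every summand below $\theta(\omega)$, the product bound over the conditionally independent exponential marks, and the estimate $1-e^{-t}\leq t$ are exactly what the paper uses. The genuine gap is in your final averaging over the point process. After your location average the unconditional bound is $\mathbb{E}\bigl[a(\theta(\omega))^{N}\bigr]$ with $N\sim\mathrm{Poisson}(\lambda_z\nu_z)$ and $a(\theta)=c\,\theta\,(1+o(1))$. This expectation cannot be converted into $a(\theta)^{\lambda_z\nu_z}$: since $x\mapsto a^{x}$ is convex, Jensen gives $\mathbb{E}[a^{N}]\geq a^{\mathbb{E}[N]}$, i.e.\ the inequality points the wrong way; and an exact computation, even after discarding the $N=0$ atom (which by itself contributes the $\omega$-independent constant $e^{-\lambda_z\nu_z}$ and already breaks the unconditional statement), gives $\mathbb{E}[a^{N}\mid N\geq 1]=\tfrac{e^{-\lambda_z\nu_z}\left(e^{\lambda_z\nu_z a}-1\right)}{1-e^{-\lambda_z\nu_z}}=\Theta\!\left(a(\theta(\omega))\right)$, dominated by the single-point configurations. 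Your route therefore yields the exponent $\theta_{\infty}$, not $\lambda_z\nu_z\,\theta_{\infty}$ --- a strictly weaker diversity order whenever $\lambda_z\nu_z>1$, which is precisely the regime of interest in the propositions. ``Taking the effective branch count to be the mean'' is the step you yourself flagged as delicate, and it is not merely delicate: no averaging over the Poisson law of $N$ can produce it, because low-count configurations dominate the outage probability.

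For comparison, the paper never conditions on $N$: it bounds the probability by $\exp\left\{\mathbb{E}\left[\sum_{Z_k\in\Phi_z}\log\left(\theta(\omega)\|Z_k\|^{\alpha}\right)\right]\right\}$ and evaluates the inner expectation by Campbell's theorem, which deterministically places $\lambda_z\nu_z$ in the exponent. (Whether $\mathbb{E}[\prod_k p_k]\leq\exp\{\mathbb{E}[\sum_k\log p_k]\}$ is itself legitimate is a separate issue --- it is again Jensen in the unfavorable direction --- but that is the device by which the paper reaches \eqref{Eqn:IeqExpRVsExpRel}, and your proposal contains no substitute for it.) To close the gap you would need either to work conditionally on a fixed realization with $|\Phi_z\cap\mathcal{B}_z|=n$ and state the result with exponent $n\,\theta_{\infty}$, or to supply an argument of the paper's Campbell type; the mean-count substitution as described does not work.
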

\begin{proof}
Without loss of generality, assume that the finite random sequence $\{g_k\,\|Z_k\|^{-\alpha}: Z_k\in\mathcal{B}_z, k\in\mathbb{N}_+\}$ forms an order statistics, \ie $\{g_1\,\|Z_1\|^{-\alpha}\leq g_2\,\|Z_2\|^{-\alpha}\leq g_3\,\|Z_3\|^{-\alpha}\cdots\leq g_k\,\|Z_k\|^{-\alpha}\leq\cdots\}$. Thus, the event $\sum_{Z_k\in\Phi_z} g_k \|Z_k\|^{-\alpha} \leq \theta(\omega)$ is equivalent to the intersection event of $g_1\,\|Z_1\|^{-\alpha}\leq \theta(\omega)$, $g_1\,\|Z_1\|^{-\alpha}+g_2\,\|Z_2\|^{-\alpha}\leq\theta(\omega),\cdots,\sum_{Z_k\in\Phi_z} g_k\,\|Z_k\|^{-\alpha} \leq \theta(\omega)$. Hence,
\begin{eqnarray*}
&{}&\log\left\{\mathbb{P}\left[\sum_{Z_k\in\Phi_z} g_k\,\|Z_k\|^{-\alpha} \leq \theta(\omega)\bigg|\Phi_z\right]\right\}\\ &&= \log\left\{\mathbb{P}\left[\bigcap_{Z_k\in\Phi_z}\left(\sum_{j=1}^{k}g_j\,\|Z_j\|^{-\alpha}\leq\theta(\omega)\right)\bigg|\Phi_z\right]\right\}\\
&&\leq \log\left\{\mathbb{P}\left[\bigcap_{Z_k\in\Phi_z} \left(g_k\,\|Z_k\|^{-\alpha}\leq\theta(\omega)\right)\bigg|\Phi_z\right]\right\}\\
&&\overset{(a)}{=} \sum_{Z_k\in\Phi_z}\log\left\{\mathbb{P}\left[g_k\,\|Z_k\|^{-\alpha}\leq\theta(\omega)\bigg|\Phi_z\right]\right\},
\end{eqnarray*}
where $(a)$ follows from the independence between all random variables. Since all random variables are exponential, then we further have
\begin{eqnarray}\label{Eqn:CondCDFChannelGain}
\mathbb{P}\left[g_k\,\|Z_k\|^{-\alpha}\leq\theta(\omega)\bigg|\Phi_z\right]&=& 1-\exp\left(-\|Z_k\|^{\alpha}\theta(\omega)\right)\nonumber\\
&\overset{(b)}{\leq}& \|Z_k\|^{\alpha}\theta(\omega),
\end{eqnarray}
where $(b)$ follows from the fact that $g_k$ is an exponential random variable with unit variance and $e^{-y}\geq 1-y,\,\forall y\in\mathbb{R}_+$.
Using \eqref{Eqn:CondCDFChannelGain} and letting $\mathcal{B}_z$ be outer bounded by a minimum disc of radius $s$, then we have
\begin{eqnarray*}
&{}&\mathbb{P}\left[\sum_{Z_k\in \Phi_z}g_k\,\|Z_k\|^{-\alpha} < \theta(\omega)\right]\\
&&\leq \exp\left\{\mathbb{E}\left[\sum_{Z_k\in\Phi_z} \log(\theta(\omega))-\alpha\log(\|Z_k\|)\right]\right\}\\
&&\overset{(c)}{=} \left(\theta(\omega)\,s^{-\alpha}\right)\star(\lambda_z\,\nu_z),
\end{eqnarray*}
where $(c)$ follows from Campbell's theorem\cite{DSWKJM96}. By Definition \ref{Defn:ExpOrder} and $\theta(\omega)\doteq \omega^{\theta_{\infty}}$, the result in \eqref{Eqn:IeqExpRVsExpRel} is readily obtained.
\end{proof}

\begin{lemma}\label{Lem:ProbOptiRandVectApprox}
Let $\mathcal{T}$ be a given countable finite set with cardinality $|\mathcal{T}|$ and $\mathcal{V}$ be a random vector set whose elements are $m$-tuples, independent and nonnegative, \ie $\mathcal{V}\defn\{\mathbf{V}_i, i\in\mathbb{N}_{+}: \mathbf{V}_i\in\mathbb{R}_+^m, \mathbf{V}_i\bot\mathbf{V}_j, i\neq j\}$. Suppose $\forall t\in\mathcal{T}$, $\mathbf{V}_t=(V_{t_1},V_{t_2},\ldots,V_{t_m})^{\top}\in\mathcal{V}$ is an exponential random vector with $m$ independent entries and $\omega\in\mathbb{R}_{++}$. Suppose $t^* \defn \arg\max_{t\in\mathcal{T}} f(\mathbf{V}_t)$ where $f(\mathbf{V}_t)$ is defined as
\begin{equation}
   f(\mathbf{V}_t) \defn \frac{\prod_{i=1}^m V_{t_i}}{\sum_{i=1}^m \beta_i(\omega) (V_{t_i})^m},
\end{equation}
where $\{\beta_i(\omega)\in\mathbb{R_{++}}\}$ are exponentially equal to $\{\beta_{i_{\infty}}\}$. If $\theta(\omega)$ is exponentially equal to $\theta_{\infty}$ and $\theta(\omega)\rightarrow 0$ as $\omega\rightarrow\infty$, then for sufficient large $\omega$ we have
\begin{equation}
    \mathbb{P}\left[f(\mathbf{V}_{t^*})<\theta(\omega)\right]\dotleq \omega^{|\mathcal{T}|(\theta_{\infty}+m\beta^+_{\max})},
\end{equation}
where $\beta^{+}_{\max} \defn \max_{i}\{\beta_{i_{\infty}},0\}$.
\end{lemma}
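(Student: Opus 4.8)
The plan is to collapse the event $\{f(\mathbf{V}_{t^{*}})<\theta(\omega)\}$ into a product over $\mathcal{T}$ by exploiting the optimality of $t^{*}$ together with the independence of the $\mathbf{V}_t$'s, and then to estimate a single factor by a pigeonhole argument on the denominator of $f$. Since $t^{*}=\arg\max_{t\in\mathcal{T}}f(\mathbf{V}_t)$, the event $\{f(\mathbf{V}_{t^{*}})<\theta(\omega)\}$ is exactly $\bigcap_{t\in\mathcal{T}}\{f(\mathbf{V}_t)<\theta(\omega)\}$, and because the vectors are mutually independent,
\[
\mathbb{P}\left[f(\mathbf{V}_{t^{*}})<\theta(\omega)\right]=\prod_{t\in\mathcal{T}}\mathbb{P}\left[f(\mathbf{V}_t)<\theta(\omega)\right].
\]
It therefore suffices to bound one factor by $\omega^{\theta_{\infty}+m\beta^{+}_{\max}}$ (in the sense of Definition~\ref{Defn:ExpOrder}) and raise it to the $|\mathcal{T}|$-th power. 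This is the same ``optimal node $\Rightarrow$ product over the point set'' device already used in part~$(b)$ of the proof of Proposition~\ref{Pro:DiveMutiTrofTDMH}, and it is precisely what allows $|\mathcal{T}|$ to be replaced by $\lambda_r\nu_r$ through Campbell's theorem in the applications.

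For a fixed $t$, write $D(\mathbf{V}_t)\defn\sum_{i=1}^{m}\beta_i(\omega)(V_{t_i})^{m}$. As a sum of $m$ nonnegative terms, $D(\mathbf{V}_t)\le m\max_{i}\beta_i(\omega)(V_{t_i})^{m}$, so $f(\mathbf{V}_t)\ge\frac{1}{m}\min_{i}\big(\prod_{j}V_{t_j}\big)/\big(\beta_i(\omega)(V_{t_i})^{m}\big)$ and hence
\[
\{f(\mathbf{V}_t)<\theta(\omega)\}\subseteq\bigcup_{i=1}^{m}\Big\{\textstyle\prod_{j\neq i}V_{t_j}<m\,\theta(\omega)\,\beta_i(\omega)\,(V_{t_i})^{m-1}\Big\};
\]
that is, at least one exponential coordinate must be atypically small relative to the others. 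For the $i$-th event I would condition on $V_{t_i}$ and estimate the conditional probability with the elementary exponential bounds of Lemma~\ref{Lem:ProbExpoRVsApprox}: when $m=2$ (the case arising in Propositions~\ref{Pro:DiveMutiTrofTDMH} and \ref{Pro:DiveMutiTrafNC}) the left side is a single exponential and $1-e^{-y}\le y$ already suffices, while for general $m$ one also uses that the probability that a product of a fixed number of independent unit-mean exponentials falls below $x$ is itself $\dotleq x$ (by a fractional negative-moment Markov bound). Averaging the resulting bound back over $V_{t_i}$ (whose moments are all finite) controls the $i$-th term by $\theta(\omega)\beta_i(\omega)$ up to constants; since $\theta(\omega)\doteq\omega^{\theta_{\infty}}$, $\beta_i(\omega)\doteq\omega^{\beta_{i_{\infty}}}$, a probability never exceeds $1$, and a vanishing $\beta_i$ only helps, each term is $\dotleq\omega^{\theta_{\infty}+\beta^{+}_{\max}}$. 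Summing the finitely many terms gives $\mathbb{P}[f(\mathbf{V}_t)<\theta(\omega)]\dotleq\omega^{\theta_{\infty}+\beta^{+}_{\max}}$, which is a fortiori $\dotleq\omega^{\theta_{\infty}+m\beta^{+}_{\max}}$, and raising to the power $|\mathcal{T}|$ as in the first paragraph yields the claim.

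The step I expect to be the main obstacle is the single-vector estimate. The numerator $\prod_{j}V_{t_j}$ and the denominator $\sum_i\beta_i(\omega)(V_{t_i})^{m}$ are built from the same random vector, so they cannot be treated as independent, and after pigeonholing the denominator the ``small coordinate'' event still carries the factor $(V_{t_i})^{m-1}$. Peeling off the coordinate being driven small, conditioning on the rest, and combining the scale-invariance of $1-e^{-y}\le y$ with the product-of-exponentials tail estimate is the clean route; it is mildly lossy, but since only a $\dotleq$ bound is sought the slack (including the factor $m$ in the exponent) does no harm. Finally, in the applications the $\beta_i$ are path-loss factors $\|X-Y\|^{\alpha}$ independent of $\omega=\gamma$, so $\beta^{+}_{\max}=0$, the exponent collapses to $|\mathcal{T}|\,\theta_{\infty}$, and after taking expectation over $\gamma$ and invoking Campbell's theorem this becomes the $\bar{\gamma}\star(\lambda_r\nu_r d_f)$-type bounds appearing in the proofs of Propositions~\ref{Pro:DiveMutiTrofTDMH} and \ref{Pro:DiveMutiTrafNC}.
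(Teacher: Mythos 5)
Your proof is correct, and its skeleton coincides with the paper's: both begin by using the optimality of $t^*$ together with the mutual independence of the $\mathbf{V}_t$ to write $\mathbb{P}[f(\mathbf{V}_{t^*})<\theta(\omega)]=\prod_{t\in\mathcal{T}}\mathbb{P}[f(\mathbf{V}_t)<\theta(\omega)]$, after which everything reduces to a single-vector estimate raised to the power $|\mathcal{T}|$. Where you genuinely diverge is in that single-vector estimate. The paper lower-bounds $f(\mathbf{V}_t)$ by $\phi_m(\omega)\,V_{t_{\max}}\Psi_{t_m}$ with $\phi_m=1/[1+\sum_i\beta_i(\omega)]$ and the order-statistic ratio $\Psi_{t_m}=(V_{t_{\min}})^m/(V_{t_{\max}})^{m+1}$, then converts the event $\{V_{t_{\max}}\Psi_{t_m}<\phi_m^{-1}\theta\}$ into a product of $m$ marginal probabilities, each controlled via $1-e^{-y}\le y$ and $\mathbb{E}[1/\Psi_{t_m}]$; this is how the factor $[\phi_m^{-1}(\omega)]^m$, and hence the $m\beta^+_{\max}$ in the exponent, arises. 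You instead pigeonhole the denominator ($D\le m\max_i\beta_i V_{t_i}^m$), take a union bound over which term dominates, condition on the coordinate being driven small, and invoke a fractional negative-moment tail bound for products of independent exponentials. Your route has two advantages: it avoids the paper's delicate steps (the events $\{V_{t_i}\Psi_{t_m}<c\}$ all share the random variable $\Psi_{t_m}$, so turning their intersection into a product of marginals needs justification, and $\mathbb{E}[1/\Psi_{t_m}]$ involves negative moments of an exponential minimum that are not obviously finite), and it actually delivers the sharper exponent $|\mathcal{T}|(\theta_\infty+\beta^+_{\max})$, of which the stated $|\mathcal{T}|(\theta_\infty+m\beta^+_{\max})$ is a weaker consequence. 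The one place you should flesh out in a final write-up is the tail bound $\mathbb{P}[\prod_{j\ne i}V_{t_j}<x]\le C_s x^s$ for $s\in(0,1)$ and the passage $s\to1$ inside the exponential-order limit, since that is where the $\dotleq$ is actually earned for $m>2$.
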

\begin{proof}
Since we know all random vectors in $\mathcal{V}$ are independent and $t^* = \arg\max_{t\in\mathcal{T}} f(\mathbf{V}_t)$, we have
\begin{eqnarray}\label{Eqn:ProbRandVect01}
\mathbb{P}\left[f(\mathbf{V}_{t^*})<\theta(\omega)\right]=\prod_{t\in\mathcal{T}}\mathbb{P}\left[f(\mathbf{V}_t)<\theta(\omega)\right].
\end{eqnarray}
In addition, for any $t\in\mathcal{T}$ it is easy to show that
$$f(\mathbf{V}_t) \geq \phi_m(\omega) \frac{V_{t_{\max}}}{V_{t_{\min}}} \left(\frac{V_{t_{\min}}}{V_{t_{\max}}}\right)^{m+1} = \phi_m(\omega) V_{t_{\max}} \Psi_{t_m},$$
where $\phi_m(\omega) \defn 1/[1+\sum_{i=1}^m\beta_i(\omega)]$, $V_{t_{\min}} \defn \min\{\mathbf{V}_t\}$, $V_{t_{\max}} \defn \max\{\mathbf{V}_t\}$, $\Psi_{t_m} \defn (V_{t_{\min}})^m/(V_{t_{\max}})^{m+1}$. Thus, $\mathbb{P}\left[f(\mathbf{V}_t) < \theta(\omega)\right] \leq \mathbb{P}\left[V_{t_{\max}} \Psi_{t_m}< \phi^{-1}_m \theta(\omega)\right]$. Also, we know
\begin{eqnarray*}
\mathbb{P}\left[V_{t_i}\Psi_{t_m} < \phi^{-1}_m \theta(\omega)\right] \overset{(a)}{\leq}
\int_{\mathbb{R}_{++}} \frac{\sigma_{t_i}\theta(\omega)}{\phi_m(\omega)\psi_{t_m}}f_{\Psi_{t_m}}(\psi_{t_m})\, d\psi_{t_m},
\end{eqnarray*}
where $f_{\Psi_{t_m}}(\psi_{t_m})$ is the probability density function of $\Psi_{t_m}$ and $(a)$ follows from exponential random variable $V_{t_i}$ with parameter $\sigma_{t_i}$ and $e^{-x} \geq 1-x,\,\,\forall x\in\mathbb{R}_+$. So for large $\omega$, we can obtain
\begin{eqnarray*}
\mathbb{P}\left[f(\mathbf{V}_t) < \theta(\omega)\right] &\leq& \prod^m_{i=1}\frac{\sigma_{t_i}\theta(\omega)}{\phi_m}\mathbb{E}\left[\frac{1}{\Psi_{t_m}}\right]\\
&\leq&\Sigma_t [\phi^{-1}_m(\omega)]^{m} \theta(\omega),
\end{eqnarray*}
where $\Sigma_t \defn \left(\mathbb{E}[1/\Psi_{t_m}]\right)^m \prod_{i=1}^{m}\sigma_{t_i}$. So \eqref{Eqn:ProbRandVect01} becomes
\begin{eqnarray*}
\mathbb{P}\left[f(\mathbf{V}_{t^*})<\theta(\omega)\right]&\leq& [\Sigma_t (\phi_m(\omega))^{-m}\theta(\omega)]^{|\mathcal{T}|}.
\end{eqnarray*}
For large $\omega$, $\mathbb{P}\left[f(\mathbf{V}_{t^*})<\theta(\omega)\right] \dotleq \omega^{|\mathcal{T}|(\theta_{\infty}+m\beta^+_{\max})}$.
\end{proof}

\bibliographystyle{ieeetran}
\bibliography{IEEEabrv,Ref_DmtMnc2Wr}

\end{document}